\newcolumntype{H}{>{\setbox0=\hbox\bgroup}c<{\egroup}@{}}
\newcolumntype{Y}{>{\RaggedRight\arraybackslash}X} 
\algrenewcommand\algorithmicindent{0.5em}%
\newtheorem{theorem}{Theorem}
\newtheorem{lemma}{Lemma}
\newtheorem{corollary}{Corollary}[theorem]
\begin{document}
\title{%Decentralized Differential Privacy driven Generation Maintenance and Commitment
Decentralized and Secure Generation Maintenance with Differential Privacy
}

% \author[1,2]{Paritosh Ramanan\thanks{CONTACT Paritosh Ramanan Email: paritoshpr@gatech.edu}}
% \affil[1]{School of Computational Science and Engineering, Georgia Institute of Technology;}
% \author[2]{Murat Yildirim}

% \name{Paritosh Ramanan\textsuperscript{a,c}\thanks{CONTACT Paritosh Ramanan Email: paritoshpr@gatech.edu}, Murat Yildirim\textsuperscript{b}, Nagi Gebraeel\textsuperscript{c} and Edmond Chow\textsuperscript{a}}
% \affil{\textsuperscript{a}School of Computational Science and Engineering, Georgia Institute of Technology; \textsuperscript{b}College of Engineering, Wayne State University; \textsuperscript{c}School of Industrial and Systems Engineering, Georgia Institute of Technology}

\author{
\name{Paritosh Ramanan\textsuperscript{a,c}\thanks{CONTACT Paritosh Ramanan Email: paritoshpr@gatech.edu}, Murat Yildirim\textsuperscript{b}, Nagi Gebraeel\textsuperscript{c} and Edmond Chow\textsuperscript{a}}
\affil{\textsuperscript{a}School of Computational Science and Engineering, Georgia Institute of Technology; \textsuperscript{b}College of Engineering, Wayne State University; \textsuperscript{c}School of Industrial and Systems Engineering, Georgia Institute of Technology}
}

% \author{\IEEEauthorblockN{Paritosh Ramanan\IEEEauthorrefmark{1}\IEEEauthorrefmark{2}\IEEEauthorrefmark{4},Murat Yildirim\IEEEauthorrefmark{3}, Nagi Gebraeel\IEEEauthorrefmark{2} and Edmond Chow\IEEEauthorrefmark{1}}
% \thanks{\IEEEauthorrefmark{4}Corresponding Author}
% \thanks{\IEEEauthorrefmark{1}School of Computational Science and Engineering, Georgia Institute of technology, Atlanta, GA, USA 30332}
% \thanks{\IEEEauthorrefmark{3}College of Engineering, Wayne State University, Detroit, MI, USA 48202}
% \thanks{\IEEEauthorrefmark{2}H. Milton Stewart School of Industrial and Systems Engineering, Georgia Institute of technology, Atlanta, GA, USA 30332.\\paritoshpr@gatech.edu,murat@wayne.edu,nagi@isye.gatech.edu,\\echow@cc.gatech.edu}
% }
\maketitle

\begin{abstract}
Decentralized methods are gaining popularity for data-driven models in power systems as they offer significant computational scalability while guaranteeing full data ownership by utility stakeholders. However, decentralized methods still require sharing information about network flow estimates over public facing communication channels, which raises privacy concerns. In this paper we propose a differential privacy driven approach geared towards decentralized formulations of mixed integer operations and maintenance optimization problems that protects network flow estimates. We prove strong privacy guarantees by leveraging the linear relationship between the phase angles and the flow. To address the challenges associated with the mixed integer and dynamic nature of the problem, we introduce an exponential moving average based consensus mechanism to enhance convergence, coupled with a control chart based convergence criteria to improve stability. Our experimental results obtained on the IEEE 118 bus case demonstrate that our privacy preserving approach yields solution qualities on par with benchmark methods without differential privacy. To demonstrate the computational robustness of our method, we conduct experiments using a wide range of noise levels and operational scenarios. 
\end{abstract}
\begin{keywords}
Decentralized optimization, differential privacy, mixed integer problems, short term maintenance.
\end{keywords}
\section*{Nomenclature}
\textbf{Sets}:
\begin{center}
\begin{supertabular}{c lc}
$\mathcal{R}$ & The set of all regions\\[1mm]
$T$ & Operational planning horizon \\[1mm]
$M$ & Set of maintenance windows\\[1mm]
$\mathcal{N}_r,G_{r},\mathcal{U}_{r},\mathcal{V}_{r},\mathcal{I}_{r}$ & Neighboring regions, generators, boundary,  foreign and \\& internal buses of  region $r$\\[1mm]
$\mathcal{B}_r$ & $\mathcal{U}_r \cup \mathcal{V}_{r}$, Boundary, foreign buses of $r$\\[1mm]
$\mathcal{N}^b_r$ & Neighboring regions connected to bus $b \in \mathcal{U}_{r}$\\[1mm]
\end{supertabular}
\end{center}
\begin{center}
\begin{supertabular}{c lc}
$G^b_{r},\mathcal{U}^b_r,\mathcal{V}^b_{r},\mathcal{I}^b_r$ & Generators, boundary, foreign, internal  buses connected \\& to bus $b \in \mathcal{U}_{r} \cup \mathcal{I}_r$\\[1mm]
$G^d_{r}$ & Generators in region $r$ that require maintenance within \\& planning horizon\\[1mm]
$\mathcal{B}^b_r$ & $\mathcal{U}^b_r \cup \mathcal{V}^b_{r} \cup \mathcal{I}^b_r$, Neighboring buses of bus $b$\\
\end{supertabular}
\end{center}
\vspace{3mm}
\hspace{-3mm}\textbf{Decision Variables} (at $t\in T$ for $g \in G$):
\begin{center}
\begin{tabular}{c lc}
$y^{g}_t$ & The electricity dispatch variable\\[1mm]
$x^{g}_t \in \{0,1\}$ & The commitment variable \\[1mm]
$z^{g}_m \in \{0,1\}$ & The maintenance variable for $m \in M$\\[1mm]
$\theta^{b}_{t}$ & The phase angle at bus $b$\\[1mm]
$f^{uv}_{t}$ & Power flow from bus $u$ to $v$ such that $u \in \mathcal{U}_{r}$ and $v \in \mathcal{V}^u_{r}$\\[1mm]
$\lambda^{b}_t$ & The Lagrangian multiplier with respect to phase angles of \\& bus $b \in \mathcal{U}_{r} \bigcup \mathcal{V}_{r}$\\[1mm]
$\phi^{uv}_t$ & The Lagrangian multiplier with respect to flow from bus $u$ to $v$ \\& where $u \in \mathcal{U}_{r}$  and $v\in \mathcal{V}_{r}$ for region $r$\\[1mm] 
\end{tabular}
\end{center}
\textbf{Constants}:
\begin{center}
\begin{supertabular}{c lc}
$d^{g},c^{g}$ & The dispatch and commitment cost of $g$\\[1mm]
$P^{g}_{min}, P^{g}_{max}$ & Minimum and maximum capacity of $g$\\[1mm]
$\mu^g_U, \mu^g_D, R^g$ & Minimum up time, down time and ramp up, down constant for $g$\\[1mm]
$\delta^{b}_t$ & The demand at bus $b$ at $t \in T$\\[1mm]
$F^{uv}_{max}$ & Maximum capacity of line  connecting  buses $u$ and $v$ where $u \in \mathcal{U}_{r}$ \\& and $v \in \mathcal{V}^u_{r}$\\[1mm]
$\rho_{\theta},\rho_{f}$ & Penalty parameter for phase angles, flows \\[1mm]
$\Gamma(uv)$ & Phase angle conversion for line $uv$\\[1mm]
%& bus $u$ and $v$\\[1mm]
\end{supertabular}
\end{center}
\vspace{3mm}
\section{Introduction}
Planning problems are the cornerstone for efficient functioning of transmission systems in large scale power systems. Some examples of critical planning problems include economic dispatch \cite{chowdhury1990review}, optimal power flow\cite{capitanescu2011state}, unit commitment (UC) \cite{unit_commitment} and maintenance \cite{maintDereg,maintDereg2}. Optimal planning decisions are subject to operational and reliability constraints \cite{marwali1998integrated} which ultimately require solving large scale optimization problems. Recently, a growing body of literature has focused on the use of consensus driven, decentralized optimization strategies to address issues of computational scalability and data localization \cite{yang2013consensus,javad,ramanan2017asynchronous,asynch2019uc,xavier2020decomposable} pertaining to the power systems planning problem. Despite their success, decentralized methods require disclosure of network flow estimates to their peers in order to compute optimal decisions. Such disclosures typically take place over public facing communication channels such as the internet \cite{asynch2019uc} leading to privacy risks emanating from a malicious third party. To address this risk, in this paper, we develop a novel decentralized optimization framework that leverages differential privacy \cite{dwork2014algorithmic}, for protecting network flow estimates. 
%, to develop a novel  for power systems planning problems. %Moreover, the operational and reliability constraints rely on infrastructure data that is held locally by the various utility stakeholders of the transmission system \cite{ramanan2017asynchronous}. In order to solve planning problems, infrastructure data must be aggregated at a centralized location leading to privacy and cyber security risks \cite{asynch2019uc,javad}

Differential privacy is a widely used method to protect the privacy of datasets intended to be communicated through public domains \cite{dwork2014algorithmic,cortes2016differential}. Differential privacy driven approaches involve injecting a randomized noise in order to obfuscate the real underlying data record. The injected randomized noise can be designed so as to facilitate theoretical guarantees bounding the loss of privacy \cite{dwork2014algorithmic}. Differential privacy thereby ensures that the probability of extracting the real value from a noisy dataset by any external entity remains remarkably low. As a result, differential privacy forms an attractive option to preserve privacy of network flow values in decentralized planning problems. 

In order to demonstrate our framework, we develop a decentralized formulation of the generation maintenance problem \cite{wang2016stochastic} whose solution is critical to the scheduling of operations and maintenance over a designated planning window. Being a fundamental problem in power systems, generation maintenance is particularly susceptible to privacy and scalability issues. There are a number of unique aspects of the generation maintenance problem that makes it an interesting problem to study in our setting. %Our choice of the short term maintenance problem for demonstrating our differential privacy framework was driven by its various attributes, making it the ideal benchmark.
First, it consists of binary decisions for generator maintenance across discrete time windows as well as hourly binary commitment decisions. Owing to binary decisions as well as a planning horizon of a week, the generation maintenance problem is large scale and mixed integer in nature. As widely-documented in decentralized optimization literature \cite{ramanan2017asynchronous}, mixed integer variables introduce significant challenges in model coordination. Second, we are focusing on sensor-driven generation maintenance which harnesses highly-sensitive asset-health data from generation assets. This information, if compromised, can lead to significant risks in asset safety and operational vulnerabilities. Third, the generation maintenance problem consists of multiple interdependent UC problems augmented with maintenance variables, making it a significantly more challenging problem than UC. It is evident, therefore, that the framework developed in this paper can be directly applied to the simpler, decentralized UC formulation as well. %The generation maintenance problem allows us to evaluate our differential privacy driven decentralized approach in a practical and realistic setting, with trivial extensions to the UC problem. 

Our decentralized formulation decomposes the power network topology (i.e. spatial decomposition) into several regions which may represent various utility stakeholders or regional monitoring centers. To decompose the problem, we first relax the network flow constraints pertaining to transmission lines connecting two regions yielding regionally independent local subproblems. These constraints are dualized and incorporated into the objective function of these local subproblems to ensure coordination across subproblems. More specifically, the network flow estimates corresponding to the dualized constraints are balanced between neighboring regions through the iterative application of the Alternating Direction Method of Multipliers (ADMM) \cite{admm_boyd}. ADMM is  a key component of decentralized operational planning strategies \cite{javad,asynch2019uc}. 
%For obtaining faster convergence and improved solution quality, such strategies also require ADMM driven balance of the phase angles as well. 
In our framework, the ADMM methods will communicate a differentially private version of the phase angles across regions, from which the corresponding flow values will be estimated. 

Our strategy for differential privacy is based on the numerous benefits stemming from the relationship between the phase angles and flow. First, owing to their linear relation, a noise injection on the phase angles leads to a corresponding linear transformation being injected to the flow as well. Second, we note that in decentralized formulations of the planning problems, phase angles are primarily meant for computing the flow \cite{ramanan2017asynchronous,asynch2019uc}. Therefore, we can choose a noise to be injected on the phase angles such that its linear transformation leads to differential privacy guarantees on the corresponding flow values. %As a result, the choice of phase angle noise need not be adhere to the stringent requirements of differential privacy. 
If chosen carefully, the noise could also ensure privacy of flow values estimated from phase angle estimates emanating from different iterations and/or regions as well. Such a feature is significantly useful in the asymptotic sense, when the true phase angle estimates across multiple iterations and regions are very close to each other.

Further, we also note that improved convergence is all the more important in a differentially private setting for a dynamically evolving process (i.e. coordination mechanism causes the underlying phase angle and flow estimates to change through iterations). To discover this dynamic underlying convergence, we adopt an Exponentially Weighted Moving Average (EWMA) that processes the noisy phase angle estimates leading to faster convergence. In order to balance the trade-off between faster convergence and better solution quality, we employ the use of a regional control chart based on the Central Limit Theorem (CLT). Our control chart is applied on the consensus quantities estimated at every iteration on each region and is geared towards bringing an out of control process to in control. As a result, the control chart mechanism stabilizes the solution quality with respect to varying noise levels while retaining good convergence behavior.  

Our contributions in this paper can be summarized as follows:
\begin{itemize}
    \item We develop an ADMM based differentially private, decentralized planning framework for the generation maintenance problem. The mixed integer nature of the problem renders the ADMM application a significant challenge even without differential privacy. In our setting, this challenge is compounded by the use of differential privacy.
    \item We propose well-suited noise injection strategies that leverage the structure of the problem. Our approach injects an engineered noise at the level of the phase angles, that culminates in differential privacy of flow values between regions.
    \item We develop an EWMA-based mechanism to improve convergence at the presence of dynamically changing flow estimates. We evaluate the EWMA outputs within a CLT based control chart for stabilizing the solution quality. 
    \item We provide a High Performance Computing (HPC) driven implementation for simulating our framework under a diverse set of scenarios. 
    %The implementation is capable of dividing the computation burden across multiple threads/cores to enhance the computational performance.
\end{itemize}
Our experiments on the 8 and 12 region decompositions of the 118 bus case demonstrates that our approach is robust to a wide variety of noise scenarios and convergence limits. Extensive experiments demonstrate that the proposed approach provides stable solution quality that rivals its benchmark without any differential privacy.  

\section{Related Works}
In transmission system planning problems, the operational and reliability constraints rely on infrastructure data that is held locally by the various utility stakeholders \cite{ramanan2017asynchronous}. In order to solve planning problems, infrastructure data must be aggregated at a centralized location leading to privacy and cyber security risks \cite{asynch2019uc,javad}. In addition to revealing private and sensitive infrastructure data of the stakeholders, such a centralized computational model also leads to communication bottlenecks on the central location \cite{liu2018decentralized}.
In the context of mixed integer power system planning problems, decentralized unit commitment was first proposed in \cite{javad} as a means for obtaining optimal UC decisions for networks %utilities part of a deregulated market to make operational decisions 
without central control. An asynchronous version of the decentralized UC framework was proposed in \cite{ramanan2017asynchronous} with the purpose of improving computational efficiency. An extended version of the asynchronous decentralized model was discussed in \cite{asynch2019uc}, which provided improved solution quality for a large scale problem setting. More recently, the work done in \cite{xavier2020decomposable} proposes a decentralized UC formulation using the Power Transfer Distribution Factor (PTDF) as a means to improve scalability. 

In this paper, we study generation maintenance problem that jointly identifies optimal maintenance and UC decisions. UC problems studied in \cite{javad,ramanan2017asynchronous,asynch2019uc,xavier2020decomposable} form a subproblem within our setting. 
%Generation maintenance is usually performed in a periodic fashion with various utilities submitting maintenance bids that is cleared by the Independent System Operator (ISO) with due regards to system reliability \cite{maintDereg,maintDereg2}. However, 
In generation maintenance, there is rich literature in coordination mechanisms between generation companies and market operators in a deregulated market setting \cite{maintDereg,maintDereg2,ghazvini2012coordination}. Our focus is on integrated operations and maintenance problems that solve for optimal maintenance as well as operational schedules subject to network constraints \cite{marwali1998integrated,fu2007security,fu2009coordination}. Generation maintenance problems typically use periodic maintenance policies, which require fixed time-based requirements for generation assets based on manufacturer recommendations and field experience (i.e. yearly major overhaul requirements). In contrast, our approach uses sensor-data to conduct condition-based maintenance strategies as proposed in \cite{muratp1,muratp2}. We integrate asset failure risks obtained through sensor data, within a joint optimization of operations and maintenance decisions. We study the short term periodic maintenance problem setting as in \cite{wang2016stochastic}. Our approach could potentially be adapted to other maintenance problems as well as operational paradigms. In our setting, any compromise to information security can reveal network-wide vulnerabilities that can lead to cyber-physical attacks on power systems, and opportunities for market manipulation \cite{sun2018cyber,zhang2016inclusion}.

Most differential privacy approaches in power systems are geared towards public release of operational power flow (OPF) data for benchmarking purposes. They address issues such as: quantifying the dynamics between injected noise and topology  \cite{zhou2019differential}; injecting noise into the OPF constraint set while guaranteeing solution accuracy \cite{fioretto2018constrained}, perturbing transmission line parameters \cite{fioretto2019differential} and hiding sensitive load locations as well as values \cite{mak2019privacy,fioretto2019privacy}. Lastly, the authors in \cite{mak2019privacy2} adopt a distributed, differentially private, ADMM driven approach to solve the AC-OPF problem by perturbing the demand at each bus. 

In contrast to the above works, our framework is meant for utility stakeholders to schedule their local operations and maintenance subject to global consensus over network constraints. Incorporating differential privacy in decentralized formulations of mixed integer power system planning problems largely remains understated, and has not been studied in a generation maintenance setting. Due to unique challenges in generation maintenance (e.g. large scale and mixed integer nature of the problem, and dynamically changing phase angle values), existing differential privacy approaches do not scale to our problem setting, requiring us to develop novel approaches to address these challenges.

\section{Decentralized Short Term Maintenance}\label{sec:mpf}
Our differential privacy driven technique is motivated by the recent developments of decentralized computational methods in power systems. As a result, our privacy preserving problem formulation comprises three main components which are detailed in this section. First, we discuss a decentralized formulation that employs mixed integer optimization techniques to yield maintenance and operational decisions including hourly commitment schedules. Second, we discuss our novel differential privacy driven information exchange that is utilized for obtaining the ADMM balance of flow. Lastly, we present our privacy preserving optimization framework that incorporates EWMA as well as control charts for stable convergence and superior solution quality. 

\subsection{Decentralized Short Term Maintenance and Commitment}
We propose a decentralized formulation based on regional decomposition leading to the respective regional subproblems. From a practical standpoint, each region may denote a subsidiary of the utility company in a vertically integrated market or a utility company in a deregulated market. Therefore, every region is comprised of local generators and buses subject to its own operational constraints. 

We show the regional decomposition of a sample network with the help of Figure \ref{fig:sample_synchronous}.
\begin{figure}[!ht]
\centering
\includegraphics[width=0.4\textwidth,keepaspectratio]{./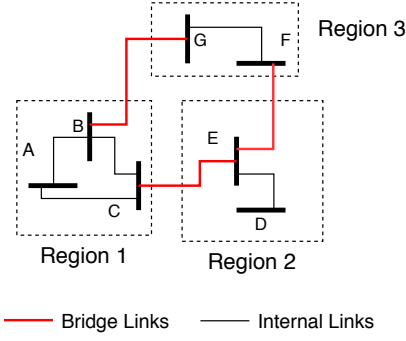}
\caption{Parition of Network topology into regions.}
\label{fig:sample_synchronous} 
\end{figure}
Our example network consists of 3 regions with boundary and foreign bus categorization for each region defined as follows:
\begin{itemize}
\item Region 1: $\mathcal{U}_{1} = \{B,C\}$, $\mathcal{V}_{1} = \{G,E\}$
\item Region 2: $\mathcal{U}_{2} = \{E\}$, $\mathcal{V}_{2} = \{F,C\}$
\item Region 3: $\mathcal{U}_{3} = \{G,F\}$, $\mathcal{V}_{3} = \{B,E\}$
\end{itemize}

The regional subproblem seeks to minimize the objective cost as represented by Problem \eqref{eq:OPT_OBJ} as follows. For simplicity we consider the vector form of the variables as necessary.

\begin{equation}\label{eq:OPT_OBJ}
\begin{aligned}
\underset{\bm{z},\bm{y},\bm{\lambda},\bm{\phi}}{\text{min}}\quad
& \mathcal{L}_r(\bm{\bar{\theta}}_k,\bm{\bar{F}}_k, \bm{\lambda}_k, \bm{\phi}_k) = \sum\limits_{ t \in T}\sum\limits_{g \in G_r}D_gy^g_t + C_gx^g_t  \\
&+ \sum\limits_{ m \in M}\sum\limits_{g \in G^d_r} K_g^m z_g^m\\
&+ \sum\limits_{t \in T}\sum\limits_{b \in \mathcal{B}_{r}}	\big[\lambda^{b}_t		|\theta^b_t - \bar{\theta}^b_t|	+	\frac{\rho_{\theta}}{2} (\theta^b_t - \bar{\theta}^b_t)^2 \big] \\
&+ \sum\limits_{t \in T}\sum\limits_{u \in \mathcal{U}_{r}}\sum\limits_{v \in \mathcal{V}^u_{r}}	\big[\phi^{uv}_t|f^{uv}_t -\bar{f}^{uv}_t|+ \frac{\rho_f}{2} (f^{uv}_t - \bar{f}^{uv}_t)^2 \big] 
\end{aligned}
\end{equation}
The regional objective function represented by Problem \eqref{eq:OPT_OBJ} consists of a dispatch cost component (the term with $D_g y^g_t$) a commitment cost component (the term with $C_g x^g_t$) and a dynamic maintenance cost component (the term with $ K_g^m z_g^m$) that is stored locally. The dynamic maintenance cost $K_g^m$ is dynamically evaluated based on sensor-driven predictions on the remaining life of generator $g$. For more information on this cost factor, we refer the reader to \cite{muratp1,muratp2}. In addition, the objective function also includes ADMM penalty terms imposed to balance flow estimates among neighboring regions. Flow estimates are iteratively balanced across transmission lines between neighboring regions through an iterative process. After every local solve of Problem \eqref{eq:OPT_OBJ}, the fresh estimates of phase angles are shared with neighbors in order to balance flows. Based on estimates received from neighbors, a consensus quantity can be estimated for flow as well as phase angles denoted by $\bar{f}, \bar{\theta}$ respectively.
 
Commitment, production and maintenance decisions are computed based on locally held constraints. Each maintenance window is comprised of several operational decision points such that,
%\[M(t) = \Bigg\{\ceil{\frac{t}} \Bigg\}\]
\[T =\bigcup\limits_{m=1}^{|M|}T_m,\text{ where, }T_m = \Bigg\{t | t\in \Bigg[\frac{m|T|}{|M|}\ldots\frac{(m+1)|T|}{|M|}\Bigg]\Bigg\} \]
The regional subproblem for the joint operations and maintenance optimization is subject to a number of local constraints for $t\in T_m$, $m\in M$ represented by the set $Q^r$:
\begin{subequations}\label{eq:OPT}
\allowdisplaybreaks[1]
\begin{align}
&x^g_t \leq 1- z^g_m \quad \forall g \in G_r \label{eq:op_mt_cpl}\\[1mm]
&P^g_{min}x^{g}_t \leq y^g_t \leq P^g_{max}x^{g}_t,
\quad \forall g \in G_r & \label{eq:sq1}\\[1mm]
&-\pi^g_{Dt} \leq x^g_t -x^g_{t-1} \leq \pi^g_{Ut}, 
\ \forall g \in G_r & \label{eq:sq2}\\[1mm]
&-R^g \leq y^g_t - y^g_{t-1} \leq R^g,  
\ \forall g \in G_r & \label{eq:sq3}\\[1mm]
%&\sum\limits_{\forall g \in G_r}y^g_t = p_{r,t}, \label{eq:sq9}\\[1mm]
\begin{split} & \ \Gamma(uv)(\theta^u_t - \theta^v_t) = f^{uv}_t, \ \forall u \in \mathcal{U}_{r}, \forall v \in \mathcal{V}^u_{r}\\
\end{split} & \label{eq:sq4}\\[1mm]
\begin{split} & -F^{uv}_{max}\leq \Gamma^{uv}(\theta^u_t - \theta^v_t) \leq F^{uv}_{max}, 
\ \forall u \in \mathcal{U}_{r} \cup \mathcal{I}_r,\forall v \in \mathcal{B}^u_{r}\end{split} & \label{eq:sq5}\\[1mm]
\begin{split} & \sum\limits_{\forall g \in G^u_{r}}y^g_t - \delta^{u}_t +\psi_t^u = \sum\limits_{\forall v \in \mathcal{B}^u_{r}}[\Gamma^{uv}(\theta^{u}_t - \theta^{v}_t)],\forall u \in \mathcal{U}_{r} \cup \mathcal{I}_r\end{split}\label{eq:sq7}\\[1mm]
\begin{split} & \sum\limits_{\forall i\in U_t} \pi^g_{Ui} \leq x^g_t \leq 1-\sum\limits_{\forall i \in D_t} \pi^g_{Di}, \ \forall g \in G_r, U_t=[t-\mu^g_U+1,t], D_t = [t-\mu^g_D+1,t] \end{split} \label{eq:sq8}
\end{align}
\end{subequations}
Constraint \eqref{eq:op_mt_cpl} ensures that generators placed under maintenance do not have any production, where $M(t)$ represents the maintenance window corresponding to the operational decision point $t$. Constraint \eqref{eq:sq1} enforces limits on each generator's maximum and minimum production levels. Constraints \eqref{eq:sq2} and \eqref{eq:sq8} enforce minimum up and down-time for each generator. Constraint \eqref{eq:sq3} enforces generator ramping limits. Equation \eqref{eq:sq4} enforces the linear relationship between flows and their respective phase angles. Constraint \eqref{eq:sq5} limits the transmission line capacity. Equation \eqref{eq:sq7} balances demand at each bus with corresponding generation and network flow. Equations \eqref{eq:sq4}-\eqref{eq:sq7} enforce network flow constraints globally.

In addition to these constraints, we also enforce that every degraded generator $g \in G^d_r$ in region $r$ is maintained within the planning horizon:
\begin{equation}\label{eq:maint_m}
\sum_{m \in \mathcal{M}} z^g_m = 1\quad \forall g \in G_r^d
\end{equation}

\section{Differential Privacy For Decentralized Planning}
Decentralized optimization frameworks for power system planning problems rely on ADMM based, iterative, flow and phase angle balancing between neighbor regions \cite{ramanan2017asynchronous,asynch2019uc,xavier2020decomposable}. In order to converge to the global optimum, regions share phase angle estimates with neighbors which can in turn also be used to estimate flow using Equation \eqref{eq:sq4} \cite{ramanan2017asynchronous,asynch2019uc}. Based on the phase angle and flow estimates received, regions compute consensus quantities $\bar{\theta},\bar{f}$ as shown in Problem \eqref{eq:OPT_OBJ}. These consensus quantities are critical for updating the Lagrangian duals $\lambda,\phi$ and therefore strongly influence global convergence to the optimal solution. In this paper, we focus on computing globally optimal, short term maintenance and operational decisions for every region in a decentralized fashion while employing differential privacy for protecting the flow. 

\subsection{Differential Privacy Primer}
Before delving into a detailed discussion of our framework, we briefly review key theorems pertaining to differential privacy that are vital to our framework.
\begin{theorem}\label{thm1}
\textbf{Definition} \cite{dwork2014algorithmic} : A randomized mechanism $\mathcal{M}$ with domain $\mathcal{R}$ and sensitivity $\omega >0$ is said to preserve $\epsilon$-differential privacy for some $\epsilon\geq 0$ 
$\forall x,x' \in \mathcal{R}$ with $||x-x'||_1 \leq \omega$, if the following relation holds:
\[Pr(\mathcal{M}(x) \in \mathcal{R})\leq e^{\epsilon}Pr(\mathcal{M}(x') \in \mathcal{R})\]
\end{theorem}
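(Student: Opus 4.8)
The statement labeled Theorem~\ref{thm1} is the standard definition of $\epsilon$-differential privacy quoted from \cite{dwork2014algorithmic}, so strictly there is nothing to \emph{prove} about it; what it sets up is the obligation the rest of the framework must discharge, namely exhibiting a concrete randomized mechanism $\mathcal{M}$ on the broadcast phase-angle estimates that satisfies the displayed inequality. I therefore describe how I would verify the definition for the mechanism the paper is evidently building toward.

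The plan is to reduce everything to the Laplace mechanism. Since each region releases the phase-angle consensus vector $\bm{\bar\theta}$ essentially as-is, the ``query'' here is the identity map, so its $\ell_1$-sensitivity equals the adjacency radius $\omega$ already appearing in the statement. First I would take $\mathcal{M}(x)=x+\eta$ with independent coordinates $\eta_i\sim\mathrm{Lap}(\omega/\epsilon)$. Second, the verification is the textbook density-ratio bound: for any output $o$ and any adjacent $x,x'$,
\[
\frac{p_{\mathcal{M}(x)}(o)}{p_{\mathcal{M}(x')}(o)}=\prod_i\exp\!\Big(\tfrac{\epsilon}{\omega}\big(|o_i-x'_i|-|o_i-x_i|\big)\Big)\le\exp\!\Big(\tfrac{\epsilon}{\omega}\|x-x'\|_1\Big)\le e^{\epsilon},
\]
using the coordinatewise triangle inequality and then $\|x-x'\|_1\le\omega$; integrating this over any measurable event recovers the inequality of Theorem~\ref{thm1}. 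Third, since each flow estimate is the fixed linear image $f^{uv}_t=\Gamma(uv)(\theta^u_t-\theta^v_t)$ of the released angles, the guarantee for the flows follows either from post-processing immunity of differential privacy or, more sharply, by carrying the injected noise through $\Gamma$ and re-deriving the ratio bound directly at the level of the flows --- this is the ``linear relationship'' the introduction relies on.

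The routine ratio computation is not where the difficulty lies. The substantive obstacles I would expect are: (i) fixing the right adjacency and sensitivity for the phase angles, which are not free variables but are coupled through the physics constraints \eqref{eq:sq4}--\eqref{eq:sq7}, and translating an adjacency stated on flows back to one on angles through $\Gamma$; (ii) the fact that the difference of two independent Laplace variables --- which is what $\Gamma(uv)(\eta^u-\eta^v)$ actually injects into $f^{uv}_t$ --- is not itself Laplace, so a direct flow-level guarantee needs either an adjusted noise design on the angles or a more careful re-analysis; and (iii) composition across the many ADMM iterations and neighboring regions, since each release spends privacy budget and the per-release $\epsilon$ must be set against a total privacy loss --- the paper's remark about near-identical estimates ``from different iterations and/or regions'' suggests they intend to exploit that closeness to avoid the naive linear growth of $\epsilon$ implied by sequential composition. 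These three points, rather than the density bound, are the real content of the argument.
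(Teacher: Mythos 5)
You are right that this ``theorem'' is simply the quoted definition of $\epsilon$-differential privacy from \cite{dwork2014algorithmic}, and the paper accordingly offers no proof of it; your recognition of that, together with the textbook Laplace density-ratio verification, is consistent with the paper's treatment. It is worth noting that your anticipated obstacle (ii) --- that the difference of two independent Laplace noises on the angles would not itself be Laplace at the flow level --- is precisely the issue the paper later resolves in Lemma~\ref{lem1} and Theorem~\ref{thm5} by injecting \emph{exponential} noise $\alpha^{b_i}_t \sim Exp\big(\tfrac{\omega}{|\Gamma(b_1b_2)|\epsilon}\big)$ on the phase angles, so that the induced perturbation on $f^{b_1b_2}_t$ is exactly $Lap(0,\omega/\epsilon)$.
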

In other words, Theorem \ref{thm1} ensures that the probability of computing the exact distance of the true value $x$ from its perturbation $\mathcal{M}$ is low. The parameter $\epsilon$ represents a privacy budget, with a smaller value favoring a higher degree of privacy. The sensitivity parameter $\omega$ ensures obfuscation of values close to each other while maintaining the relative difference of values far apart. 

\begin{theorem}\label{thm2}
\textbf{Post Processing Immunity} \cite{dwork2014algorithmic} : Given a mechanism $\mathcal{M}$ that preserves $\epsilon$-differential privacy, then for any function $g$, the functional composition $g \circ \mathcal{M}$ also preserves $\epsilon$-differential privacy.
\end{theorem}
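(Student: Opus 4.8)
The plan is to prove the statement directly from the definition of $\epsilon$-differential privacy given in Theorem \ref{thm1}, treating first the case where $g$ is a measurable deterministic map and then reducing the randomized case to it. Fix two inputs $x,x' \in \mathcal{R}$ with $\|x-x'\|_1 \leq \omega$. I want to show that for every (measurable) event $S$ in the range of $g$,
\[
\Pr\big((g\circ\mathcal{M})(x) \in S\big) \leq e^{\epsilon}\,\Pr\big((g\circ\mathcal{M})(x') \in S\big).
\]
First I would observe that $(g\circ\mathcal{M})(x) \in S$ is exactly the event $\mathcal{M}(x) \in g^{-1}(S)$, where $g^{-1}(S) = \{r : g(r) \in S\}$ is the preimage. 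Since $g$ is measurable, $g^{-1}(S)$ is a measurable subset of the range of $\mathcal{M}$, hence a legitimate event to which the differential privacy guarantee of $\mathcal{M}$ applies. Therefore
\[
\Pr\big((g\circ\mathcal{M})(x)\in S\big) = \Pr\big(\mathcal{M}(x)\in g^{-1}(S)\big) \leq e^{\epsilon}\,\Pr\big(\mathcal{M}(x')\in g^{-1}(S)\big) = e^{\epsilon}\,\Pr\big((g\circ\mathcal{M})(x')\in S\big),
\]
where the middle inequality is precisely the statement of Theorem \ref{thm1} applied with $x,x'$ and the event $g^{-1}(S)$. Since $S$ was arbitrary, $g\circ\mathcal{M}$ satisfies $\epsilon$-differential privacy.

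For the general case in which $g$ is itself randomized, I would write $g$ as a deterministic function of its argument together with an independent source of randomness $\xi$, i.e. $g(r) = h(r,\xi)$ for a measurable $h$ and $\xi$ independent of $\mathcal{M}$. Conditioning on $\xi = s$, the map $h(\cdot,s)$ is deterministic, so by the argument above $\Pr\big(h(\mathcal{M}(x),s)\in S\big) \leq e^{\epsilon}\Pr\big(h(\mathcal{M}(x'),s)\in S\big)$ for each fixed $s$; integrating this inequality over the law of $\xi$ (using independence and Fubini) preserves it, giving the claim for randomized $g$.

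The only delicate point — and the one I would be most careful about — is the measurability bookkeeping: ensuring $g^{-1}(S)$ is a valid event and, in the randomized case, that the decomposition $g(r)=h(r,\xi)$ with independent $\xi$ is available and that the interchange of integration is justified. These are standard facts (the latter is essentially Fubini's theorem together with the standard representation of a Markov kernel), so no real obstacle arises; the substance of the proof is the one-line preimage identity combined with the hypothesis that $\mathcal{M}$ already satisfies the inequality in Theorem \ref{thm1} for \emph{all} measurable events, in particular for $g^{-1}(S)$.
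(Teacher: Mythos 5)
Your proof is correct. Note, however, that the paper does not actually prove Theorem~\ref{thm2}: it is stated as a known result and cited directly from the differential privacy literature (Dwork and Roth), so there is no in-paper argument to compare against. What you have written is the standard textbook proof of post-processing immunity --- the preimage identity $\{(g\circ\mathcal{M})(x)\in S\}=\{\mathcal{M}(x)\in g^{-1}(S)\}$ for deterministic measurable $g$, followed by the reduction of a randomized $g$ to a mixture of deterministic maps via an independent noise source --- and both steps are sound. The one thing worth flagging is that the paper's statement of Theorem~\ref{thm1} is informally worded (it reuses $\mathcal{R}$ for both the domain and the event), so your reading of the hypothesis as ``the inequality holds for \emph{all} measurable events in the range of $\mathcal{M}$'' is the correct and necessary interpretation for your argument, since you must be able to instantiate it at the event $g^{-1}(S)$.
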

An important result of differential privacy pertains to post processing immunity encapsulated in Theorem \ref{thm2}. The post processing immunity implied by Theorem \ref{thm2} means that once differential privacy has been applied on any element of the domain $\mathcal{R}$, no further privacy can be lost with application of any arbitrary function by a third party.

\begin{theorem}\label{thm3}
\textbf{Adaptive Composition} \cite{cortes2016differential} : Given mechanisms $\mathcal{M}_1,\mathcal{M}_2$, which ensure $\epsilon_1,\epsilon_2$ differential privacy respectively, a mechanism $\mathcal{M}(x) = \mathcal{M}_2(x,\mathcal{M}_1(x))$ also preserves differential privacy.
\end{theorem}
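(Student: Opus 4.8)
The plan is to establish that the composed mechanism $\mathcal{M}(x)=\mathcal{M}_2(x,\mathcal{M}_1(x))$ preserves $(\epsilon_1+\epsilon_2)$-differential privacy by a direct factorization argument on the joint output distribution. First I would fix two inputs $x,x'\in\mathcal{R}$ with $\|x-x'\|_1\leq\omega$ and an arbitrary (measurable) set $S$ of outcomes of $\mathcal{M}$. An outcome of $\mathcal{M}$ is a pair $(y,z)$, where $y$ is the realized output of $\mathcal{M}_1(x)$ and $z$ is the realized output of $\mathcal{M}_2(x,y)$; since the internal randomness of the two stages is independent once their inputs are fixed, the density (or probability mass) of $\mathcal{M}$ at $(y,z)$ factorizes as $\Pr[\mathcal{M}_1(x)=y]\cdot\Pr[\mathcal{M}_2(x,y)=z]$.

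Next I would apply the two privacy guarantees in sequence. Because $\mathcal{M}_1$ is $\epsilon_1$-differentially private, $\Pr[\mathcal{M}_1(x)=y]\leq e^{\epsilon_1}\Pr[\mathcal{M}_1(x')=y]$ for every intermediate value $y$. The adaptivity hypothesis is that, for each fixed first-stage output $y$, the map $x\mapsto\mathcal{M}_2(x,y)$ is $\epsilon_2$-differentially private in its data argument, so $\Pr[\mathcal{M}_2(x,y)=z]\leq e^{\epsilon_2}\Pr[\mathcal{M}_2(x',y)=z]$ for the same neighboring pair $x,x'$. Multiplying these two pointwise inequalities and then integrating (or summing) the factorized density over all $(y,z)\in S$ yields $\Pr[\mathcal{M}(x)\in S]\leq e^{\epsilon_1+\epsilon_2}\Pr[\mathcal{M}(x')\in S]$, which is exactly the $(\epsilon_1+\epsilon_2)$-differential privacy claim via the criterion of Theorem \ref{thm1}.

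The step I expect to require the most care is justifying that the $\epsilon_2$ bound for the second stage can be pulled outside the integral uniformly over $y$: this is precisely where the word \emph{adaptive} matters, since $\mathcal{M}_2$ observes the output $\mathcal{M}_1(x)$ as auxiliary input, and one must know that its privacy level does not degrade as this auxiliary input ranges over all possible first-stage outputs. I would also handle the measure-theoretic bookkeeping — invoking Fubini/Tonelli to interchange the integrations over $y$ and $z$, and treating the discrete and continuous cases uniformly by working with respect to an appropriate reference measure — but these are routine once the factorization of the joint distribution and the two pointwise bounds are in place.
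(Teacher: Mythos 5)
Your proof is correct and is the standard factorization argument for adaptive composition: condition on the first-stage output, apply the $\epsilon_1$ and $\epsilon_2$ bounds pointwise, multiply, and integrate. The paper itself states this theorem as a cited background result from \cite{cortes2016differential} and offers no proof, so there is nothing to compare against; your argument matches the canonical proof in the literature. One useful refinement you supply that the paper omits: you quantify the composed guarantee as $(\epsilon_1+\epsilon_2)$-differential privacy, whereas the paper's statement only says the composition ``preserves differential privacy'' without specifying the budget, and you correctly flag that the adaptivity hypothesis (uniformity of the $\epsilon_2$ bound over all first-stage outputs $y$) is the step that genuinely needs care.
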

Theorem \ref{thm3} provides a critical result that is especially useful for developing a decentralized, iterative optimization framework, wherein existing results depend on consensus estimates obtained from the previous iteration.

Lastly, in Theorem \ref{thm4}, we describe the Laplacian mechanism, which is one of the most commonly used techniques for ensuring differential privacy.
\begin{theorem}\label{thm4}
\textbf{Laplacian mechanism} \cite{dwork2014algorithmic} : Given a function $g$ with domain $\mathcal{R}$ and $\omega>0$, a mechanism $\mathcal{M}(x) = g(x)+w$, where $x\in \mathcal{R}$ and $w\sim Lap(0,\frac{\omega}{\epsilon})$ is $\epsilon$-differentially private.
\end{theorem}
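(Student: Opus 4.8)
The plan is to establish the inequality of Theorem~\ref{thm1} for $\mathcal{M}(x)=g(x)+w$ directly, by comparing the output densities of $\mathcal{M}(x)$ and $\mathcal{M}(x')$ pointwise. Since $w\sim Lap(0,\omega/\epsilon)$, the random variable $\mathcal{M}(x)$ has density $p_x(y)=\tfrac{\epsilon}{2\omega}\exp\!\big(-\tfrac{\epsilon}{\omega}|y-g(x)|\big)$ at a point $y$, and a product of such terms coordinatewise when $g$ is vector-valued (here the independence of the coordinates of $w$ is what makes the density factorize). First I would form the likelihood ratio $p_x(y)/p_{x'}(y)$; the normalizing constants cancel, leaving $\exp\!\big(\tfrac{\epsilon}{\omega}\big(|y-g(x')|-|y-g(x)|\big)\big)$ in the scalar case, and the analogous product over coordinates in general.

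The key step is the reverse triangle inequality $|y-g(x')|-|y-g(x)|\le |g(x)-g(x')|$, applied coordinatewise and then summed, which bounds the ratio by $\exp\!\big(\tfrac{\epsilon}{\omega}\,\|g(x)-g(x')\|_1\big)$. Invoking the sensitivity hypothesis -- that $\|x-x'\|_1\le\omega$ forces $\|g(x)-g(x')\|_1\le\omega$ -- this ratio is at most $e^{\epsilon}$ for every $y$. Integrating the pointwise inequality $p_x(y)\le e^{\epsilon}p_{x'}(y)$ over an arbitrary measurable set $S\subseteq\mathcal{R}$ gives $Pr(\mathcal{M}(x)\in S)\le e^{\epsilon}Pr(\mathcal{M}(x')\in S)$, which is precisely the $\epsilon$-differential privacy condition of Theorem~\ref{thm1}.

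I expect the only real subtlety to be notational rather than mathematical: one must be careful that the scalar $\omega$ is playing the role of the $\ell_1$-sensitivity of $g$ over $\omega$-adjacent inputs, so that the chain $\|x-x'\|_1\le\omega\Rightarrow\|g(x)-g(x')\|_1\le\omega$ is legitimate, and, in the vector-valued case, that the reverse triangle inequality is applied term by term before the coordinates are recombined into an $\ell_1$ norm. Beyond this bookkeeping, the argument is the classical Laplace-mechanism proof and requires no further machinery.
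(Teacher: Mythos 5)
Your proof is the standard Laplace-mechanism argument (pointwise density ratio, reverse triangle inequality, integrate over measurable sets) and it is correct; the paper does not prove Theorem~\ref{thm4} at all but simply cites it from \cite{dwork2014algorithmic}, where essentially your argument appears. You also correctly flag the one genuine subtlety in the statement as written: the scalar $\omega$ must be read as the $\ell_1$-sensitivity of $g$ itself (so that $\|g(x)-g(x')\|_1\le\omega$ for $\omega$-adjacent inputs), since the implication from $\|x-x'\|_1\le\omega$ alone would fail for a general $g$; in the paper's only use of this theorem (Theorem~\ref{thm5}) the relevant map is linear, so this reading is consistent with how the result is applied.
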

Theorem \ref{thm4} guarantees that the mechanism $\mathcal{M}$ based on the Laplacian distribution with zero mean and standard deviation $\frac{\omega}{\epsilon}$ preserves $\epsilon$-differential privacy. 
\subsection{Differential Privacy for Flow}
%As evidenced by Problem \eqref{eq:OPT_OBJ}, sharing of phase angles between regions is necessary for decentralized short term maintenance and commitment problem. Therefore, our objective is to utilize the notion of differential privacy in order to protect the flow information from a malicious third party. 
%we leverage Equation \eqref{eq:sq4} in order to
In order to protect regional flow values with differential privacy, we exploit the linear relationship between phase angles and the corresponding flow variables. As a result, we propose Theorem \ref{thm5} which relies on an important property of Laplace distributions stated in Lemma \ref{lem1} \cite{kotz2012laplace}.
\begin{lemma}\label{lem1}  
Given two random variables $X,Y \sim Exp(\omega)$, the random variable $X-Y \sim Lap (0,\omega)$.
\end{lemma}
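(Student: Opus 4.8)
The plan is to compute the law of $Z := X - Y$ directly, treating $X$ and $Y$ as independent (independence of the two exponential draws is implicit in how the lemma will be used, since each injected noise term is generated from its own fresh sample). Two routes are available — convolution of densities, or characteristic functions — and I would lead with the characteristic-function argument since it avoids the case analysis on the sign of $z$, keeping the density convolution as an elementary, self-contained alternative.

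First I would pin down the parametrization: $X,Y\sim Exp(\omega)$ have density $\tfrac{1}{\omega}e^{-x/\omega}$ on $[0,\infty)$ (equivalently, rate $1/\omega$), and the target $Lap(0,\omega)$ has density $\tfrac{1}{2\omega}e^{-|z|/\omega}$ on $\mathbb{R}$. The characteristic function of $X$ is $\varphi_X(t)=(1-i\omega t)^{-1}$, hence $\varphi_{-Y}(t)=\varphi_Y(-t)=(1+i\omega t)^{-1}$. By independence, $\varphi_Z(t)=\varphi_X(t)\,\varphi_{-Y}(t)=\big[(1-i\omega t)(1+i\omega t)\big]^{-1}=(1+\omega^2 t^2)^{-1}$, which is precisely the characteristic function of $Lap(0,\omega)$. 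Invoking the uniqueness theorem for characteristic functions then yields $Z\sim Lap(0,\omega)$.

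As a self-contained alternative I would carry out the convolution: the density of $-Y$ is $\tfrac{1}{\omega}e^{y/\omega}\mathbf{1}_{\{y\le 0\}}$, so $f_Z(z)=\int_{\mathbb{R}} f_X(x)\,f_{-Y}(z-x)\,dx$, with the integrand supported on $\{x\ge 0\}\cap\{x\ge z\}$. For $z\ge 0$ this integral evaluates to $\tfrac{1}{2\omega}e^{-z/\omega}$, and by the symmetry $X-Y\stackrel{d}{=}Y-X$ the case $z<0$ contributes $\tfrac{1}{2\omega}e^{z/\omega}$; together these give $\tfrac{1}{2\omega}e^{-|z|/\omega}$, the $Lap(0,\omega)$ density.

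There is no deep obstacle here — the statement is classical; the only points requiring care are (i) making the independence of $X$ and $Y$ explicit, and (ii) keeping the scale conventions for $Exp(\omega)$ and $Lap(0,\omega)$ consistent so that the cancellation $(1-i\omega t)(1+i\omega t)=1+\omega^2 t^2$ matches the stated Laplace scale. For the downstream use in Theorem \ref{thm4} and the flow-privacy result, I would state the lemma with a generic scale $\omega$ so that substituting $\omega\mapsto\omega/\epsilon$ immediately recovers the $Lap(0,\omega/\epsilon)$ mechanism.
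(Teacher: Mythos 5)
Your proof is correct. Note that the paper itself does not prove this lemma at all --- it is cited directly from the Kotz--Kozubowski--Podg\'orski monograph on Laplace distributions --- so there is no in-paper argument to compare against; your characteristic-function computation $\varphi_{X-Y}(t) = (1-i\omega t)^{-1}(1+i\omega t)^{-1} = (1+\omega^2 t^2)^{-1}$ together with the uniqueness theorem is the standard derivation, and your convolution alternative checks out as well. Your two points of care are both well taken and both left implicit in the paper: the statement only holds for \emph{independent} $X$ and $Y$ (which is how the noise terms $\alpha^{b_1}_t, \alpha^{b_2}_t$ are generated downstream in Theorem \ref{thm5}), and the scale (mean-$\omega$) parametrization of $Exp(\omega)$ is the one consistent with the paper's later substitution $\omega \mapsto \omega/(|\Gamma(b_1b_2)|\epsilon)$ yielding $Lap(0,\tilde\omega/\epsilon)$.
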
 
\begin{theorem}\label{thm5}
Given a transmission line across $b_1b_2$ between regions $r_1,r_2$, such that $b_1\in\mathcal{U}_{r_1}, b_2 \in \mathcal{V}^{b_1}_{r_1}, b_2\in\mathcal{U}_{r_2}, b_1 \in \mathcal{V}^{b_2}_{r_2}$, a mechanism $\mathcal{T}$ given by 
\[\mathcal{T}(\theta^{b_i}_t) = \theta^{b_i}_t + \alpha^{b_i}_t, \textit{ where, } \alpha^{b_i}_t \sim Exp\Big(\frac{\omega}{|\Gamma(b_1b_2)|\epsilon}\Big) \textit{ , } i\in{1,2}\]
preserves $\epsilon$-differential privacy of flow $f^{b_1b_2}_t$ defined as $\mathcal{M}'(\theta_t^{b_1},\theta_t^{b_2}) = \Gamma(b_1b_2) \Big(\mathcal{T}(\theta^{b_1}_t) - \mathcal{T}(\theta^{b_2}_t)\Big)$.
%$\mathcal{M}(\theta^{b_i}) = \theta^{b_i} + \alpha$ preserves $\epsilon$-differential privacy of $f^{b_1b_2}$, where $\alpha \sim Exp(\frac{\omega}{\Gamma^{b_1b_2}\epsilon})$}
\end{theorem}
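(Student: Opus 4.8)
The plan is to collapse the composite mechanism $\mathcal{M}'$ into a single Laplacian mechanism acting directly on the flow variable, and then invoke Theorem~\ref{thm4}. First I would exploit the linearity of $\mathcal{T}$ together with the linear flow relation in Equation~\eqref{eq:sq4}. Substituting $\mathcal{T}(\theta^{b_i}_t)=\theta^{b_i}_t+\alpha^{b_i}_t$ gives
\[
\mathcal{M}'(\theta^{b_1}_t,\theta^{b_2}_t)=\Gamma(b_1b_2)\big(\theta^{b_1}_t+\alpha^{b_1}_t-\theta^{b_2}_t-\alpha^{b_2}_t\big)=f^{b_1b_2}_t+\Gamma(b_1b_2)\big(\alpha^{b_1}_t-\alpha^{b_2}_t\big),
\]
so the net perturbation applied to the true flow is the single random variable $w:=\Gamma(b_1b_2)(\alpha^{b_1}_t-\alpha^{b_2}_t)$, and everything reduces to identifying the law of $w$.

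Second, I would note that the two noise draws $\alpha^{b_1}_t$ and $\alpha^{b_2}_t$ are generated independently at the two buses, each from $Exp(\omega/(|\Gamma(b_1b_2)|\epsilon))$, so Lemma~\ref{lem1} yields $\alpha^{b_1}_t-\alpha^{b_2}_t\sim Lap\big(0,\,\omega/(|\Gamma(b_1b_2)|\epsilon)\big)$. I would then apply the elementary scaling property of the Laplace family: if $Z\sim Lap(0,b)$ then $cZ\sim Lap(0,|c|b)$, which follows from a one-line change of variables in the density. Taking $c=\Gamma(b_1b_2)$, the factor $|\Gamma(b_1b_2)|$ cancels against the $|\Gamma(b_1b_2)|$ in the scale of $\alpha^{b_1}_t-\alpha^{b_2}_t$, giving $w\sim Lap(0,\omega/\epsilon)$.

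Combining the two steps, $\mathcal{M}'$ is precisely the mechanism that releases $f^{b_1b_2}_t+w$ with $w\sim Lap(0,\omega/\epsilon)$, i.e.\ it has the exact form $g(x)+w$ of Theorem~\ref{thm4} with $g$ the identity map on the flow value and sensitivity parameter $\omega$. Theorem~\ref{thm4} then certifies $\epsilon$-differential privacy of $f^{b_1b_2}_t$, and by the post-processing immunity of Theorem~\ref{thm2} any subsequent quantity computed from this released flow (such as the ADMM consensus terms $\bar f$) retains the same guarantee.

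The conceptual content is light; the main care points are bookkeeping. I expect the chief subtlety to be the sensitivity accounting: being explicit that $\omega$ is the $L_1$ sensitivity of the flow (so that Theorem~\ref{thm4} applies verbatim), and that this is exactly why the per-bus noise scale carries the extra $1/|\Gamma(b_1b_2)|$ factor — it is this rescaling that makes the cancellation in the second step work. A secondary point is to state explicitly that the two exponential noises must be independent for Lemma~\ref{lem1} to apply, and to dispatch the degenerate case $\Gamma(b_1b_2)=0$, which is physically excluded since $\Gamma(b_1b_2)$ is the nonzero phase-angle conversion factor of an existing line.
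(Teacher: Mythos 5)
Your proof is correct and follows essentially the same route as the paper's: substitute $\mathcal{T}$ into the linear flow relation \eqref{eq:sq4}, identify the net perturbation $\Gamma(b_1b_2)(\alpha^{b_1}_t-\alpha^{b_2}_t)$, use Lemma \ref{lem1} to recognize the difference of exponentials as Laplacian, rescale so the $|\Gamma(b_1b_2)|$ factors cancel to give $Lap(0,\omega/\epsilon)$, and invoke Theorem \ref{thm4}. Your explicit handling of the scaling property, the independence of the two noise draws, and the $\Gamma(b_1b_2)=0$ degeneracy is slightly more careful than the paper, which compresses the rescaling step into an appeal to the symmetry of the Laplace distribution, but the argument is the same.
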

\begin{proof}
Consider a Laplacian distribution $Lap(0,\tilde{\omega}/\epsilon)$, where $\tilde{\omega} = \omega(\Gamma(b_1b_2))^{-1}$. Substituting $\mathcal{T}(\theta^{u}_t)$ in Equation \eqref{eq:sq4}, we have 
\begin{align}\label{eq:flowdp}
\mathcal{M}'(\theta_t^{b_1},\theta_t^{b_2}) &= \Gamma(b_1b_2)\Big(\mathcal{T}(\theta^{b_1}_t) - \mathcal{T}(\theta^{b_2}_t)\Big) \\  &=  \Gamma(b_1b_2)(\theta^{b_1}_t - \theta^{b_2}_t) + \psi^{b_1,b_2}_t
\end{align}
where \eqref{eq:flowdp}, $\psi^{b_1,b_2}_t = \Gamma(b_1b_2)(\alpha^{b_1}_t - \alpha^{b_2}_t)$. Based on Lemma \ref{lem1}, we can claim that the random variable given by $\frac{\psi^{b_1,b_2}_t}{\Gamma(b_1b_2)} = (\alpha^{b_1}_t - \alpha^{b_2}_t)$ follows $Lap(0,\tilde{\omega}/\epsilon)$. Owing to the Laplace distribution being symmetric, we can state that the random variable $\psi^{b_1,b_2}_t \sim Lap(0,\omega/\epsilon)$. %Therefore, we can denote a mechanism $\mathcal{M}'$ such that,
%\begin{gather}
%    \mathcal{M}'\Big(\Gamma(b_1b_2)(\theta^{b_1}_t - \theta^{b_2}_t)\Big) = \Gamma(b_1b_2)(\theta^{b_1}_t - \theta^{b_2}_t) + \psi^{b_1,b_2}_t \label{eq:mp}
%\end{gather}
Since $\Gamma(b_1b_2)(\theta^{b_1}_t - \theta^{b_2}_t) = f^{b_1,b_2}_t$ denotes the real flow and $\psi^{b_1,b_2}_t \sim Lap(0,\omega/\epsilon)$ is Laplace distributed, Theorem \ref{thm4} concludes that mechanism $\mathcal{M}'$ preserves the $\epsilon$-differential privacy of flow.
\end{proof}

An important consequence of Theorem \ref{thm5} is that the phase angle values with an exponential perturbation directly lead to imposing the $\epsilon$-differential privacy for the flow.  Therefore, our iterative scheme is based on sharing the phase angles and by extension the differentially private flow values as well. 
\begin{corollary}\label{corl1}
%$\mathcal{M}(\theta^{u,k,r}_t) = \theta^{u,k,r}_t + \alpha^{u,k,r}_t
Given $\theta^{b_1,k,r_1}_t, \alpha^{b_1,k,r_1}_t$ for bus $b_1$ at iteration $k$ and region $r_1$, and $\theta^{b_2,j,r_2}_t, \alpha^{b_2,j,r_2}_t$ for bus $b_2$ at iteration $j$ and region $r_2$, respectively; the mechanism $\mathcal{M}'(\theta^{b_1,k,r_1}_t,\theta^{b_2,j,r_2}_t)$ still preserves the $\epsilon$-differential privacy of the corresponding flow.% =\Gamma(b_1b_2)(\theta^{b_1,k,r}_t - \theta^{b_2,k',r'}_{t'}) + \psi_t^{b_1,b_2}$

\end{corollary}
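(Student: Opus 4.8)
The plan is to mirror the proof of Theorem \ref{thm5} almost verbatim, letting the deterministic quantity $\Gamma(b_1b_2)(\theta^{b_1,k,r_1}_t - \theta^{b_2,j,r_2}_t)$ play the role that the true flow $f^{b_1b_2}_t$ played there, and then to invoke Theorem \ref{thm3} to absorb the fact that the two phase-angle estimates now come from different iterations and regions and may therefore depend on one another. First I would expand $\mathcal{M}'(\theta^{b_1,k,r_1}_t,\theta^{b_2,j,r_2}_t)$ using the definition of $\mathcal{T}$ exactly as in \eqref{eq:flowdp}, obtaining $\Gamma(b_1b_2)(\theta^{b_1,k,r_1}_t - \theta^{b_2,j,r_2}_t) + \Gamma(b_1b_2)(\alpha^{b_1,k,r_1}_t - \alpha^{b_2,j,r_2}_t)$, and isolate the perturbation term $\psi = \Gamma(b_1b_2)(\alpha^{b_1,k,r_1}_t - \alpha^{b_2,j,r_2}_t)$.

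Second, I would show $\psi \sim Lap(0,\omega/\epsilon)$. The two draws $\alpha^{b_1,k,r_1}_t$ and $\alpha^{b_2,j,r_2}_t$ are sampled independently and freshly — one by region $r_1$ at iteration $k$, the other by region $r_2$ at iteration $j$ — and each is $Exp(\omega/(|\Gamma(b_1b_2)|\epsilon))$ by the construction of $\mathcal{T}$. Lemma \ref{lem1} then gives $\alpha^{b_1,k,r_1}_t - \alpha^{b_2,j,r_2}_t \sim Lap(0,\omega/(|\Gamma(b_1b_2)|\epsilon))$, and multiplying by $\Gamma(b_1b_2)$ together with the symmetry of the Laplace density yields $\psi \sim Lap(0,\omega/\epsilon)$, precisely as in the proof of Theorem \ref{thm5}. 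Since $\Gamma(b_1b_2)(\theta^{b_1,k,r_1}_t - \theta^{b_2,j,r_2}_t)$ is a deterministic functional of the sensitive phase-angle pair and $\psi$ is Laplace-distributed with the correct scale, Theorem \ref{thm4} certifies that this single release is $\epsilon$-differentially private whether or not the two angles share an iteration index or a region.

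Third — and this is the only place the argument genuinely departs from Theorem \ref{thm5} — I would account for $\theta^{b_1,k,r_1}_t$ not being freely chosen but being the solution of Problem \eqref{eq:OPT_OBJ}, whose inputs include earlier differentially private broadcasts (possibly $\theta^{b_2,j,r_2}_t$ itself when $j<k$, or consensus quantities assembled from region $r_2$'s prior messages). Because the composite release is then of the form $\mathcal{M}_2(x,\mathcal{M}_1(x))$ with each stage differentially private in its own independent noise, Theorem \ref{thm3} lets us conclude that the cross-iteration, cross-region release still preserves $\epsilon$-differential privacy of the underlying flow; Theorem \ref{thm2} then ensures that any downstream manipulation of the released value (forming $\bar{f}$, updating $\phi$) leaks nothing additional.

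I expect the main obstacle to be the independence bookkeeping underpinning the second and third steps: Lemma \ref{lem1} requires $\alpha^{b_1,k,r_1}_t$ and $\alpha^{b_2,j,r_2}_t$ to be independent, so I would make explicit the modeling assumption that every region redraws its exponential noise independently at each iteration for each boundary bus, and I would be careful that the adaptive-composition invocation is applied to the noise sources rather than to the deterministic pseudo-flow, so that the sensitivity condition $\|x-x'\|_1 \le \omega$ used implicitly by Theorems \ref{thm1} and \ref{thm4} is inherited unchanged across iterations and regions.
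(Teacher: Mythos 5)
Your proposal is correct and follows essentially the same route as the paper: expand $\mathcal{T}$, isolate the noise difference $\alpha^{b_1,k,r_1}_t - \alpha^{b_2,j,r_2}_t$, apply Lemma \ref{lem1} to obtain a Laplacian perturbation, and conclude $\epsilon$-differential privacy of the scaled difference (the paper handles the multiplication by $\Gamma(b_1b_2)$ via post-processing, Theorem \ref{thm2}, rather than re-deriving the scale as in Theorem \ref{thm5}, but this is immaterial). Your additional invocation of Theorem \ref{thm3} to account for the adaptive dependence of $\theta^{b_1,k,r_1}_t$ on earlier private releases, and your explicit statement of the independence assumption on the fresh exponential draws, are points the paper leaves implicit; they strengthen rather than alter the argument.
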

\begin{proof}
We note that 
\begin{equation}
    \mathcal{T}(\theta^{b_1,k,r_1}_t) - \mathcal{T}(\theta^{b_2,j,r_2}_{t}) = (\theta^{b_1,k,r_1}_t - \theta^{b_2,j,r_2}_{t}) + (\alpha^{b_1,k,r_1}_t - \alpha^{b_2,j,r_2}_{t})
\end{equation}
Following a similar reasoning as presented in Theorem \ref{thm5}, Lemma \ref{lem1} indicates that the noise given by $\alpha^{b_1,k,r_1}_t - \alpha^{b_2,j,r_2}_{t}$ follows a Laplacian distribution ensuring $\epsilon$-differential privacy. From Theorem \ref{thm2}, it also follows that $\Gamma(b_1 b_2)(\mathcal{T}(\theta^{b_1,k,r_1}_t) - \mathcal{T}(\theta^{b_2,j,r_2}_{t}))$ also remains differentially private.
\end{proof}
Corollary \ref{corl1} establishes the fact that phase angle values derived from a combination of historically observed values cannot be used to infer actual flow values. Thus, Corollary \ref{corl1} is especially useful in an asymptotic sense when the real phase angle estimates across multiple iterations and regions might be close. Finally, we note that the indices $1$ and $2$ were used for ease of exposition, and the theorem applies for flow estimates of any transmission line. 

\section{Algorithm Design for Decentralized Differential Privacy }
We employ Theorem \ref{thm5} in order to construct our decentralized algorithm with differential privacy. Our algorithm consists of two key components pertaining to improved convergence and added stability. For improving convergence we use an EWMA based consensus technique to balance phase angles and flow. On the other hand, for added stability, we propose the use of CLT control charts. 
\subsection{EWMA based Consensus Mechanism for ADMM}
At iteration $k$, each region shares the noisy phase angles $\hat{\theta}^{b,k}_t = \theta^{b,k}_t + 2\alpha^{b,k}_t$, $\forall b \in \mathcal{U}_{r}\bigcup \mathcal{V}_r$ with its neighbors. The noisy phase angle estimates are utilized to compute the noisy flow estimates $\hat{f}^{uv,k}_t$, $\forall u \in \mathcal{U}_{r}, \forall v \in \mathcal{V}^u_{r}$ based on Theorem \ref{thm5}. 

Based on the received values from neighbor $r' \in \mathcal{N}_r$, we estimate the two consensus terms for transmission line $uv$ separately leading to the \emph{intermediate flow},\emph{intermediate phase angle} denoted by $\bar{f}^{uv,k}_t$, $\bar{\theta}^{b,k}_t$ respectively. However, for added stability we apply an Exponentially Weighted Moving Average (EWMA) to the received values from the neighbor. EWMA leads us to Equations \eqref{eq:tintm1} to \eqref{eq:fintm3} with the mixing factor denoted by $\eta$.
Specifically, for phase angles of bus $b \in \{u,v\}$ corresponding to transmission line $uv$ we have,
\begin{gather}
\bar{\theta}^{b,k}_t = \frac{\theta^{b,k}_t + \tilde{\theta}^{b,k,r'}_t}{2} \label{eq:tintm1}\\
\tilde{\theta}^{b,k,r'}_t = \eta(\hat{\theta}^{b,k,r'}_t) + (1-\eta)(\tilde{\theta}^{b,k-1,r'}_t) \label{eq:tintm2}\\
\tilde{\theta}^{b,0,r'}_t = \hat{\theta}^{b,0,r'}_t \label{eq:tint3}
\end{gather}
Similarly for flow on transmission line $uv,\forall u \in \mathcal{U}_{r}, \forall v \in \mathcal{V}^u_{r}$ we have,
\begin{gather}
\bar{f}^{uv,k}_t = \frac{f^{uv,k}_t + \tilde{f}^{uv,k,r'}_t}{2} \label{eq:fintm1} \\
\tilde{f}^{uv,k,r'}_t = \eta(\hat{f}^{uv,k,r'}_t) + (1-\eta)(\tilde{f}^{uv,k-1,r'}_t) \label{eq:fintm2}\\
\tilde{f}^{uv,0,r'}_t = \hat{f}^{uv,0,r'}_t \label{eq:fintm3}
\end{gather}
We update the Lagrangian multipliers as follows
\begin{equation}\label{eq:lupdt}
%\begin{aligned}
\lambda^{b,k}_t = \lambda^{b,k-1}_t+ \rho_{\theta}(\theta^{b,k}_t - \bar{\theta}^{b,k}_t) \quad  \forall b \in \mathcal{B}_r, \forall t \in T
\end{equation}
\begin{equation}\label{eq:fupdt}
\phi^{uv,k}_t = \phi^{uv,k}_t + \rho_f(f^{uv,k}_t - \bar{f}^{uv,k}_t) \quad \forall u \in \mathcal{U}_{r}, \forall v \in \mathcal{V}_{r}, \forall t \in T
%\end{aligned}
\end{equation}
%We further define the primal and dual variables with respect to flows as follows
%&+\sum\limits_{b \in \mathcal{B}_{r}}\big[\lambda^{b,t}||\bm{\theta}^b - \bar{\bm{\theta}}^b||+\frac{\rho}{2} ||\bm{\theta}^b - \bar{\bm{\theta}}^b||^2_2 \big] & & & &\\%  \frac{\rho}{2} ||\bm{F} - \bm{\tilde{F}}||^2 \\
%&+\sum\limits_{u \in \mathcal{U}_{r}}\sum\limits_{v \in \mathcal{V}^u_{r}}\big[\phi^{uv,t}||\bm{F}^{uv} -\bar{\bm{F}}^{uv}||+\frac{\rho}{2} ||\bm{F}^{uv} - \bar{\bm{F}}^{uv}||^2_2 \big] & & & &\\
The optimization model given by (\ref{eq:OPT}) describes a Mixed-Integer Quadratic Problem (MIQP) which solves for the maintenance and operations in a decentralized manner. Since we have the presence of binary variables in $z$, our problem is \emph{non-convex}. As a result, it becomes much harder than traditional convex schemes to achieve convergence in a decentralized manner. Recent works have demonstrated the successful application of ADMM for solving decentralized non-convex problems. The maintenance cost is fed as input to the data and is derived from the work done in \cite{muratp2}. 

The Lagrangian terms in the model serve as \emph{penalties} for deviating from a position of balance. Convergence occurs when these terms become small enough such that the optimization problem given by (\ref{eq:OPT}) becomes mathematically equivalent to that of a \emph{centralized} problem as described in \cite{muratp2}

\subsection{CLT Control Chart based Convergence Criteria}
In order to bolster the robustness of overall solution quality against the noise, we employ a CLT based control chart as our convergence criteria. Our use of the CLT driven control chart is driven by the symmetric nature of the Laplacian distribution. Consider $\Uptheta$ defined as,
\begin{gather}
\Uptheta = \sum\limits_{k=1}^{S_w}(\theta^{u,k,r}_t + \alpha^{u,k,r}_t - \theta^{u,k,r'}_t - \alpha^{u,k,r'}_t)
%\Uptheta = \sum\limits_{i=1}^{W}\Big((\theta^{u,k,r}_t - \theta^{u,k,r'}_t) + (\alpha^{u,k,r}_t-\alpha^{u,k,r'}_t)\Big)
\end{gather}
where $S_w$ are the number of iterations which form one point on the CLT.
Asymptotically, we expect the real phase angles from different regions obtained at iteration $k$ to match. Applying CLT on their corresponding noise leads us to require that $\Uptheta \sim N(0,\frac{2\tilde{\omega}^2}{S_w})$. Under CLT, the phase angle control process triggers an alarm whenever $\Uptheta> \sqrt{\frac{2\tilde{\omega}^2}{S_w}}$ or $\Uptheta< -\sqrt{\frac{2\tilde{\omega}^2}{S_w}}$. A single value of $\Uptheta$ forms one point on the control chart for the respective bus. We assume that our stopping criteria triggers local convergence only when $S_p$ points on the control chart lead to a single alarm. This implies that local convergence can only be certified in multiples of $S_w\cdot S_p$ iterations.
%However, for implementation, we propose an additional parameter $W'$ which represents the 
\subsection{Decentralized Maintenance and Operations Algorithm}\label{sec:malg}
Our decentralized and differentially private algorithm relies on two key components pertaining to the local optimization solves on every region and the peer-to-peer communication scheme. For notational simplicity, we denote $\Delta = \{\bm{\bar{\theta}}_k,\bm{\bar{F}}_k, \bm{\lambda}_k, \bm{\phi}_k\}$. 
\subsubsection{Local Optimizer}
The subroutine \texttt{OptSolve} represents the local optimization that occurs at every region. Specifically, our regional solver consumes the constraint set represented by $Q$ in addition to the objective function \eqref{eq:OPT_OBJ} derived from latest consensus variables as well as their corresponding multipliers. It returns the latest estimates pertaining to the commitment, production, maintenance decisions as well as unperturbed phase angles and flow estimates.
\begin{algorithmic}\label{alg:lmtopt}
\Function{OptSolve}{$\mathcal{L}_r(\Delta),\bm{Q}$}
    \State $\{\bm{x},\bm{y},\bm{z},\bm{\theta},\bm{f}\}\leftarrow{\text{min }} \mathcal{L}_r(\Delta)$ subject to $Q$
    \State \Return $\{\bm{x},\bm{y},\bm{z},\bm{\theta},\bm{f}\}$
\EndFunction
\end{algorithmic}
\subsubsection{Differential Privacy driven Communication scheme}
The subroutine \texttt{DPCommunicate} represents the peer-to-peer to communication scheme that adopts the phase angle and flow based differential privacy scheme given in Theorem \ref{thm5}. Exponentially perturbed phase angle estimates are shared with neighbors and consensus quantities are computed at every iteration. Local convergence occurs when the phase angle residual values are below the primal and dual tolerance ($\beta_p,\beta_d$) respectively and the number of alarms $\kappa$ in the region is less than $|\mathcal{U}_r \cup \mathcal{V}_r|$.
\begin{algorithmic}
\Function{DPCommunicate}{$k,\Delta^{k-1},\bm{\theta}^k,\bm{f}^k$}
\State send $\mathcal{M}(\bm{\theta}^{b,k})$ to all regions $r',$ $\forall r' \in \mathcal{N}_r, b\in \mathcal{B}_{r'}$
        \State receive $\bm{\tilde{\theta}}^{b,k}$ from all regions $r',$ $\forall r' \in \mathcal{N}_r, b\in \mathcal{B}_{r'}$
        \State compute $\bm{\bar{\theta}}_k,\bm{\bar{f}}_k, \bm{\lambda}_k, \bm{\phi}_k$ based on Equations \eqref{eq:tintm1}-\eqref{eq:fupdt}
        %,\eqref{eq:fintm},\eqref{eq:lupdt},\eqref{eq:fupdt},\eqref{eq:dv},
        \If {$||\bm{\theta}^{k}-\bm{\bar{\theta}}^k||<\beta$ $\&$ $||\bm{\bar{\theta}}^{k}-\bm{\bar{\theta}}^{k-1}||<\beta$ $\&$ $\kappa<|\mathcal{U}_r\cup \mathcal{V}_r|$}
        	\State set local convergence to true
        	\State if local convergence is true $\forall r \in \mathcal{R}$ then $\Upomega\leftarrow 1$
	    \EndIf
	    \State \Return $\{\Delta^{k},\Upomega\}$
\EndFunction
\end{algorithmic} 
\subsubsection{Regional Solver}
The subroutine \texttt{DecentDPOpt} represents the regional solver which iteratively invokes the local optimizer followed by a round of differentially private message exchange with the neighbors.
\begin{algorithmic}\label{alg:ddpopt}
\Function{DecentDPOpt}{$\Delta,\bm{Q}^r$}
\State $k\leftarrow 0$, $\Delta_0\leftarrow \Delta$,
\State set global convergence value $\Upomega\leftarrow0$
\While {$\Upomega\neq0$}
\State $k\leftarrow k+1$
\State $\{\bm{x}^{k},\bm{y}^{k},\bm{z}^{k},\bm{\theta}^k,\bm{f}^k\}$$\leftarrow$$\Call{OptSolve}{\mathcal{L}_r(\Delta^{k-1}),\bm{Q}^r}$
\State $\{\Delta^{k},\Upomega\}\leftarrow\Call{Communicate}{k,\Delta^{k-1},\bm{\theta}^k,\bm{f}^k}$
\EndWhile
\State \Return $\{\bm{x}^k,\bm{y}^{k},\bm{z}^{k},\Delta^k\}$
\EndFunction
\end{algorithmic}
\begin{algorithm}
\caption{Short term Decentralized Maintenance and Operations Algorithm}\label{alg:syncd}
\begin{algorithmic}
\State $\{\bm{x},\bm{y},\bm{z},\Delta\}_{R}$$\leftarrow$$ \Call{DecentDPOPT}{\Delta_0,\bm{Q}^r_{relax}}$
\State$\{\bm{x},\bm{y},\bm{z},\Delta\}$$\leftarrow$$ \Call{DecentSGOpt}{\Delta_{R},\bm{Q}^r_{bin}}$
\end{algorithmic}
\end{algorithm}
We use the regional solver to construct our decentralized algorithm represented in Algorithm \ref{alg:syncd} which comprises of two phases. In the first phase, we obtain convergence using a relaxation of the binary commitment and maintenance variables represented by the constraint set $Q^r_{relax}$. We utilize the phase and flow balance attained from the convergence of the relaxation to jump start the next phase involving binary commitment and flow variables represented by $Q^r_{bin}$. Such a two phased method has been demonstrated as one of the feasible methods for decentralized, mixed integer convergence in \cite{ramanan2017asynchronous,asynch2019uc}.
\begin{figure*}[!ht]
\includegraphics[trim=140 5 60 35,clip,width=\textwidth,keepaspectratio]{./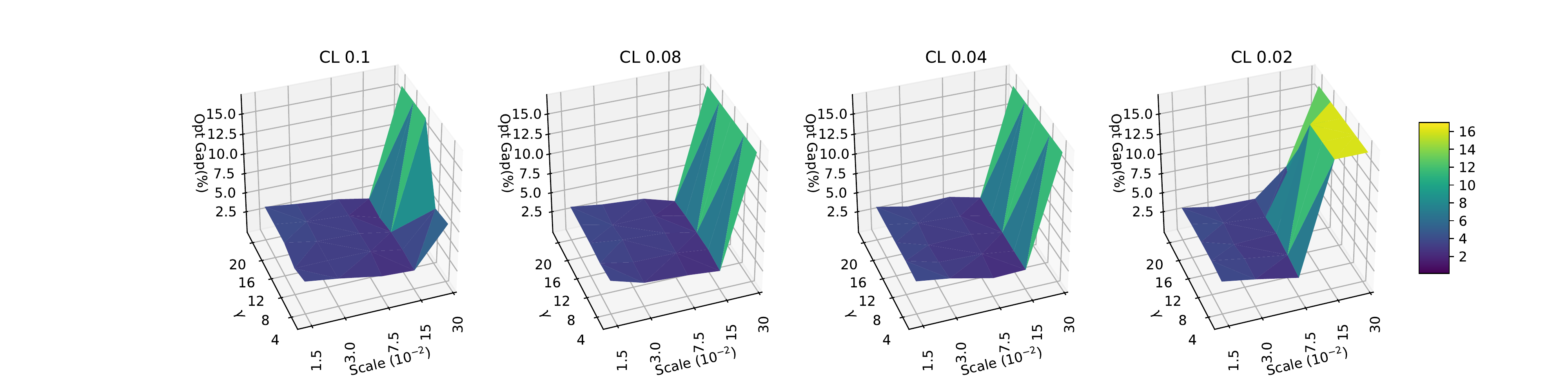}
\centering
\caption{Robustness Analysis for 8 region decomposition}
\label{fig:og8}
\end{figure*}
\begin{figure*}[!ht]
\includegraphics[trim=140 5 60 35,clip,width=\textwidth,keepaspectratio]{./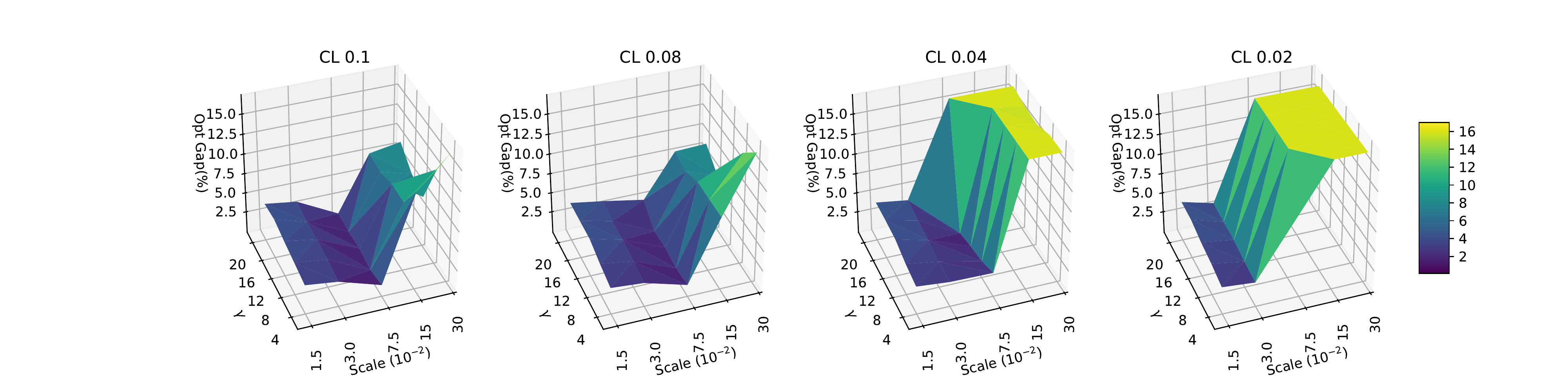}
\caption{Robustness Analysis for 12 region decomposition}
\label{fig:og12}
\end{figure*}

\begin{figure}
\centering
\begin{minipage}{.5\textwidth}
  \centering
   \includegraphics[width=\textwidth,keepaspectratio]{./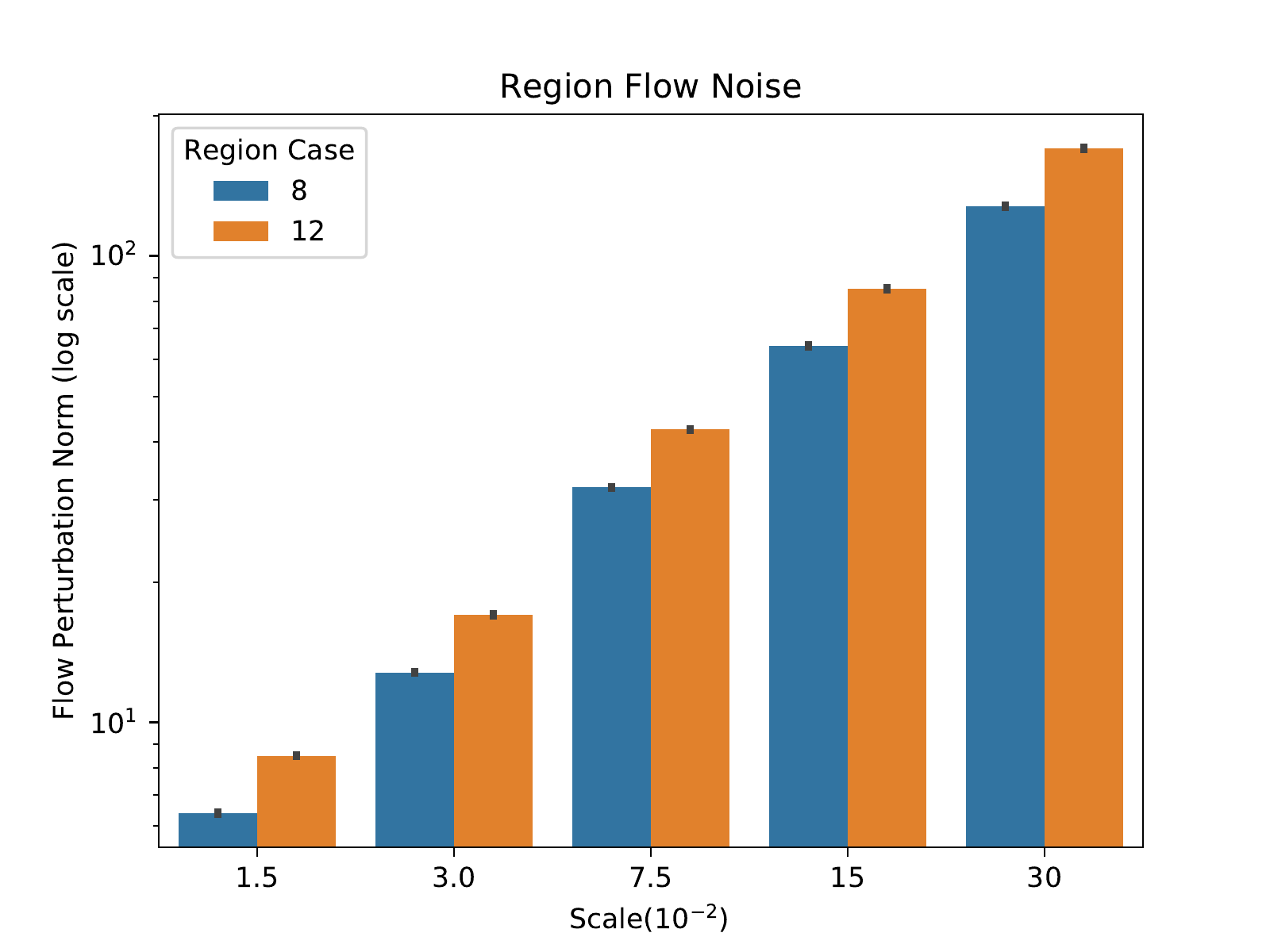}
\caption{Flow Noise Analysis}
\label{fig:fn_t}
\end{minipage}%
\begin{minipage}{.5\textwidth}
  \centering
   \captionsetup{justification=centering}
\includegraphics[width=\textwidth,keepaspectratio]{./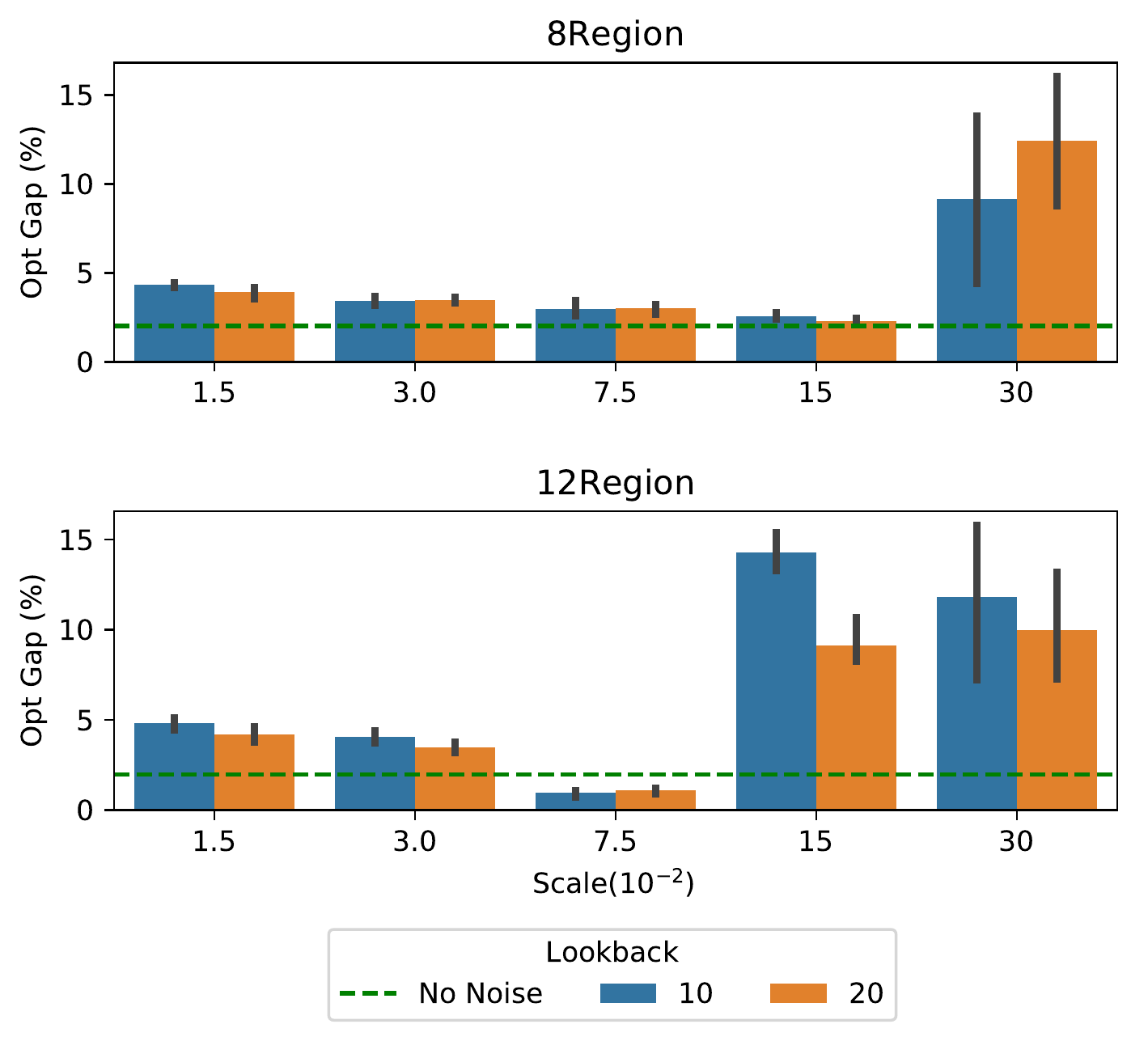}
\caption{Computational Analysis}
\label{fig:wwlim}
\end{minipage}
\end{figure}

% \begin{figure}[!ht]
% \centering
% \includegraphics[width=0.5\textwidth,keepaspectratio]{./DPImages/flow_noise.pdf}
% \caption{Flow Noise Analysis}
% \label{fig:fn_t}
% \end{figure}
% \begin{figure}[!ht]
% \centering
% \captionsetup{justification=centering}
% \includegraphics[width=0.5\textwidth,keepaspectratio]{./DPImages/wwlim.pdf}
% \caption{Computational Analysis}
% \label{fig:wwlim}
% \end{figure}

\begin{table}%\label{tab:eta1}
\centering
        \caption{EWMA Mixing parameter $\eta$}
        \begin{tabular}{|c|ccccc|}
        Scale  & \multicolumn{5}{c|}{$\gamma$} \\\cline{2-6}
        ($10^{-2}$)  & 4 & 8 & 12 & 16 & 20  \\ 
        \hline
        1.5    & 0.997 & 0.9976 & 0.9982 & 0.9988 & 0.9994 \\ 
        3.0    & 0.994 & 0.9952 & 0.9964 & 0.9976 & 0.9988\\
        7.5    & 0.985 & 0.988 & 0.991 & 0.994 & 0.997\\
        15.0   & 0.97 & 0.976 & 0.982 & 0.988 & 0.994\\
        30.0   & 0.94 & 0.952 & 0.964 & 0.976 & 0.988\\
        %\hline
        \end{tabular}
        \label{tab:eta1}
\end{table}
\section{Results}\label{sec:mresults}
In order to highlight the efficacy of our DP framework, we conduct numerous experiments on the 8 and 12 region decompositions of the IEEE 118 bus case. Our experiments revolve around analyzing the solution quality with respect to differing convergence limits, varying injected noise levels as well as changing the lookback window size for the control chart. We compute the solution quality  relative to a centralized, non DP formulation of the maintenance problem. 

\subsection{Experimental Setup}
Our decentralized implementation is based on the Message Passing Interface (MPI) which is a popular paradigm for distributed memory computation in the field of High Performance Computing (HPC). 
In order to implement Algorithm \ref{alg:syncd} each region in our problem was assigned to a single MPI process. Using a distributed memory model like MPI for communication helps us evaluate the algorithm in an environment close to the \emph{real-world}, where each region may represent individual participants of an ISO. Further, we impose an overlay network on top of the MPI layer which restricts a particular node to communicate only with the nodes representing the neighbors for its own region. The computational framework and software provided in this paper can be used as validation tools for large scale real world implementations on a myriad of computational platforms.

We used \verb|python| as the programming language for the framework. The \verb|mpi4py| \cite{mpi4py} package which is an MPI package for \verb|python| was used to build the decentralized framework. We used \verb|Gurobi 7.1| \cite{gurobi} for solving the MIQP problem represented by (\ref{eq:OPT}) on each node. We evaluate our model on the IEEE 118 bus case with data derived from the \verb|MATPOWER| library \cite{matpower}. We simulate a geographically dispersed set of regions on a high
performance cluster consisting of Intel Xeon CPUs with a clock rate of 2.80GHz with each core representing one region. Our planning horizon is of 1 week with hourly operational decisions and maintenance windows lasting 6 hours each. A preferred window for each generator scheduled for maintenance is provided as input to the local subproblem. We limit the maximum deviation from such preferences to at most 4 maintenance epochs. We impose a total runtime restriction of 10,600 secs for convergence. We compute the EWMA mixing parameter $\eta = 1.5\omega\times10^{-3}$ as represented in Table \ref{tab:eta1}.

The primal and dual tolerance limits for local convergence is given by $\beta_p, \beta_d = CL\cdot|\mathcal{B}_r|\cdot |T|$ respectively, where $CL$ is the convergence limit parameter. A higher $CL$ value could potentially yield faster convergence due to limited iterations that lead to reduced requirement for information exchange. Therefore, a higher $CL$ value might be a more preferred option given a reasonable degree of solution quality.

% \subsection{Degradation Modeling}
% In order to obtain the sensor data necessary to derive the maintenance cost of the assets, we rely on vibration data acquired from a rotating machinery apparatus. Using this experimental setup, condition monitoring is employed to estimate the degradation of the rolling element bearings present in the rotating machine apparatus. Rolling element bearing is chosen primarily because mechanical methods are one of the preferred means to accomplish condition monitoring in an industrial setting \cite{cbmepp}. Further, rolling element bearings are one of the most common components to experience degradation during normal operations \cite{rollingbearing}. 

% The sensor signals obtained from the bearings is then used to represent the degradation of the generators in the power system as well. We follow an experimental setup outlined in \cite{bayesdegrad} that traces the degradation of bearings from their relatively unused state till their failure.
% %An experimental setup is used to observe the degradation of bearings from brand new state until their failure. Details of this setup can be found in [35].

\subsection{Benchmark}
In order to benchmark our result, we consider a centralized version of Algorithm \ref{alg:syncd} without Differential Privacy. We use the centralized benchmark to rate the performance of the decentralized algorithm by measuring the relative optimality gap
\begin{equation}
Opt Gap = \frac{|\xi^{decent}-\xi^{cent}|}{\xi^{cent}}
\end{equation}
where $\xi^{decent},\xi^{cent}$ represent the total objective value upon convergence for the decentralized and centralized case respectively. Optimality gap values of experiments which did not converge have been capped at 16\%.

%For termination we impose the criteria
%$\alpha<\epsilon_{p}$ and $\beta < \epsilon_{d}$
%where $\epsilon_p$ and $\epsilon_d$ are the tolerances on the primal and dual residuals.

% We define $\alpha$ and $\beta$ as a means to define the primal and dual residuals.
% \begin{equation}\label{eqn:primal}
% \alpha = \sum\limits_{t \in T}\sum\limits_{u \in \mathcal{U}_{r}}\sum\limits_{v \in \mathcal{V}^u_{r}}|\theta^{u}_{t,k} - \bar{\theta}^{u}_{t,k}|^2
% \end{equation}
% \begin{equation}\label{eqn:dual}
% \beta = \sum\limits_{t \in T}\sum\limits_{u \in \mathcal{U}_{r}}\sum\limits_{v \in \mathcal{V}^u_{r}}|F^{uv}_{t,k} - F^{uv}_{t,k-1}|^2
% \end{equation}

%We further denote the penalty parameters for for \emph{phase angles} and \emph{flow} as $\rho_t$ and $\rho_f$ respectively.
%Based on empirical results as depicted in Table \ref{tab:decent_1188} and Table \ref{tab:decent_11810}, we get the best results by setting different values for \emph{phase angles} and \emph{flow}.

\subsection{Robustness Analysis}
Figures \ref{fig:og8}, \ref{fig:og12} depict surface plots pertaining to performance of our DP framework subject to different convergence limits. Each surface plots tracks the solution quality in terms of optimality gap with varying values of $\gamma$ and noise levels. 

From Figures \ref{fig:og8}, \ref{fig:og12} we observe that our DP framework provides stable performance, yielding optimality gap values of less than 5\% in most cases for both 8 and 12 region decompositions. We also observe that a higher $CL$ value does not necessarily come at the cost of poor optimality gap. Even with a high noise level, an appropriate selection of $\gamma$ might lead to an acceptable solution quality of around 5\% and 6\% in the 8 and 12 region case respectively with $CL$ value of 0.1. Further, as expected, the figures reveal that convergence becomes harder with higher noise levels and reduced $CL$ values. The figures also show that the number of regions in the network has a strong bearing on the convergence behavior and optimality gap. Such a behavior is expected since more regions imply a higher number of tie lines whose corresponding flow and phase angle values need to be balanced leading to greater difficulty in convergence. 

\subsection{Noise Analysis}
Figure \ref{fig:fn_t} depicts the 2- norm difference between the real flow and DP based flow values upon convergence for 8 and 12 region cases. We notice that with increasing scale of the noise as given on the x-axis, the flow perturbation value keeps increasing as well indicating our framework's success in preserving flow privacy. Further, the noise magnitude for each scale value does not vary much between the 8 and 12 region cases signaling its independent nature as well as applicability to different real world instances. 

\subsection{Lookback Analysis}
Figure \ref{fig:wwlim} depicts the trends observed for lookback sizes of 10 and 20 for a convergence limit of 0.1 in terms of a bar plot. We also compare the optimality gap obtained with varying noise values to the no noise case without DP. 

Overall, Figure \ref{fig:wwlim} shows that our framework performs remarkably well with varying noise levels compared to the no noise case. Moreover, we also observe that a lower lookback size leads to greater variance in optimality gap values. However, on the other hand, a lower lookback directly implies lesser number of iterations, leading to fewer message exchanges.

An interesting behavior in Figure \ref{fig:wwlim} is the decreasing optimality gap with increasing noise levels. This trend persists till $15\times 10^{-2}$ and $7.5 \times 10^{-2}$ noise levels for the 8 and 12 region cases respectively before increasing for greater noise levels. Such a behavior can be attributed to the sensitivity of ADMM to the penalty parameter $\rho$. Methods aimed at tuning the penalty parameter have been proposed  \cite{mhanna2018adaptive} for convex problems, however their applicability to Mixed Integer problems remains unexplored. Therefore, the penalty parameter was empirically chosen based on repeated trials with different values. Automatically adjusting the penalty parameter during run time for a finer performance of the Mixed Integer formulation is a key component of our future work. 
\subsection{Computational Analysis}
\begin{table}
\centering
        \caption{Computational Time (secs)}
        \begin{tabular}{|c|cc|cc|}
        Scale  & \multicolumn{2}{c|}{8 Regions} & \multicolumn{2}{c|}{12 Regions}\\\cline{2-5}
        ($10^{-2}$)  & Mean& Std. Dev. & Mean & Std. Dev.  \\ 
        \hline
        1.5    &1030.35 & 6.86 &  663.84 & 7.58\\ 
        3.0    &1023.20 & 14.18 & 665.88 & 7.12\\
        7.5    &1031.46 & 4.51 & 1990.82 & 2514.67\\
        15.0    &2240.02 & 3159.36 & 7052.10 & 4443.22\\
        30.0    &9217.68 & 3279.17 & 5516.14 & 4100.72\\
        %\hline
        \end{tabular}
        \label{tab:comp}
\end{table}

Table \ref{tab:comp} shows the mean computational time along with the standard deviation for 8 and 12 region cases for all CL and $\gamma$ values. The figures in Table \ref{tab:comp} also include cases where no convergence was observed within the maximum run time limit. 

We observe that for lower noise cases, the mean computation time proportionately decreases with increase in the number of regions from 8 to 12. Such a decrease is attributed to the computational speedup gained as a result of increased parallelism. However, we also observe an increase in the mean computational time for higher noise cases along with a variance. An increased variance and higher mean time to convergence is the consequence of two factors. First, due to a higher noise level, obtaining the Lagrangian balance might be more difficult leading to more iterations and prolonged convergence time. Additionally, higher noise levels result in greater likelihood of not observing convergence within the prescribed time limit, ultimately increasing both the mean and variance. 

\section{Conclusion}\label{sec:mconclusion}
In this paper we present a differential privacy driven approach for solving planning problems in a decentralized fashion. We choose the short term maintenance and commitment problem as our target for demonstrating the efficacy of our approach due to its practical and critical aspects. Our decentralization is driven by a region based decomposition representing real world utility stakeholders. We obtain a decentralized formulation by dualizing the phase angles and flow constraints neighboring regions and iteratively balance these using ADMM. For orchestrating differential privacy, we exploit the linear relationship between the flow and phase angles. By injecting carefully coordinated exponential noise on the phase angles, we derive strong privacy guarantees on the flow values. In order to improve convergence, we adopt an EWMA based consensus averaging strategy in addition to a CLT based control chart. Our consensus strategy coupled with our control chart mechanism leads to a stable superior convergence for fairly large noise values. Further, using our HPC implementation of our decentralized framework, we show good solution quality that rivals that of the centralized as well as the no noise benchmarks. 
%As part of our future work, we intend to investigate the dynamics between the noise and the solution quality in order to generate a theoretically backed guideline for noise injection. 
\bibliography{artifact}
\bibliographystyle{ieeetr}
\end{document}